\def\papertitle{Characterization of Incentive Compatible
  Single-parameter Mechanisms Revisited}
\def\paperauthorA{Krzysztof R. Apt}
\def\paperauthorB{Jan Heering}
\newif\ifpdf
\newtheorem{theorem}{Theorem}
\newtheorem{defined}[theorem]{Definition}
\newtheorem{exa}{Example}
\newtheorem{lemma}[theorem]{Lemma}
\newtheorem{corollary}[theorem]{Corollary}
\newtheorem{remark}[theorem]{Remark}
\newtheorem{note}[theorem]{Note}
\newtheorem{exe}{Exercise}
\newtheorem{pro}{Problem}
\newcommand{\bfe}[1]{\begin{bfseries}\emph{#1}\end{bfseries}}
\newcommand{\fa}{\mbox{$\forall$}}
\newcommand{\LL}{\mbox{$\ldots$}}
\newcommand{\NI}{\noindent}
\newcommand{\HB}{\hfill{$\Box$}}
\newcommand{\II}{\vspace{2 mm}}
\newcommand{\szkew}[1]{\relax \setbox0=\hbox{\kern -24pt $\displaystyle#1$\kern 0pt }%
\box0}
{\catcode`\@=11 \global\let\ifjusthvtest@=\iffalse}
\newcounter{oldmycaption}
\newcommand{\x}{\textbf{x}}
\newcommand{\y}{\textbf{y}}
\title{\papertitle}
\twoaffiliations{\paperauthorA}{CWI, Amsterdam, The Netherlands\\ MIMUW, University of Warsaw, Poland \\ \texttt{\small apt@cwi.nl}}
{\paperauthorB}{CWI, Amsterdam, The Netherlands (retired) \\ \texttt{\small jan.heering1@gmail.com}\thanks{We thank Guido Sch\"{a}fer for suggesting to analyze the
characterization result discussed in this paper and Marcin
Dziubi\'{n}ski for helpful comments. We are grateful to one of
  the referees for recommending to discuss the uniqueness proofs
  given in \cite{Kri09} and \cite{Bor15}.
}}
\begin{document}


\ifpdf 
  \DeclareGraphicsExtensions{.png,.jpg,.pdf}
\else  
  \DeclareGraphicsExtensions{.eps}
\fi

\maketitle\label{FirstPage}

\begin{abstract}
We reexamine the characterization of incentive compatible single-parameter
mechanisms introduced in \cite{AT01}. We argue that the claimed
uniqueness result, called `Myerson’s Lemma' was not well established. We
provide an elementary proof of uniqueness that unifies the presentation for
two classes of allocation functions used in the literature and show that the
general case is a consequence of a little known result from the theory of real
functions. We also clarify that our proof of uniqueness is more elementary
than the previous one. Finally, by generalizing our characterization result to
more dimensions, we provide alternative proofs of revenue equivalence results
for multiunit auctions and combinatorial auctions.
\end{abstract}

\vspace{0.2in}

\noindent {\it Keywords}: incentive compatibility, single-parameter
mechanisms, Myerson's lemma, auctions, revenue equivalence.

\vspace{.1in}

\noindent {\it JEL Classification Numbers}: D44, D82.


\section{Introduction}


\lettrine[findent=2pt]{\textbf{W}}{ }\ e are concerned here with a characterization result of a specific
class of incentive compatible direct selling mechanisms.  For the sake
of this article such a \bfe{mechanism} consists of an \bfe{allocation
  rule} that assigns some good or goods to the participants, called
agents, and a \bfe{payment rule} that determines how much each agent
needs to pay.  Assuming that each agent has a \bfe{private valuation}
of the good or goods, these decisions are taken in response to a
vector of \bfe{bids} made by the agents. These bids may differ from
agents’ true valuations.  Recall that a mechanism is (\bfe{dominant
  strategy}) \bfe{incentive compatible} (alternatively called
\bfe{truthful}), if no agent is better off when providing false
information regardless of reports of the other agents
or, more precisely, when submitting a bid different from his/her
valuation regardless of reports of the other agents.

Given a class of mechanisms one of the main problems is to
characterize their incentive compatibility in terms of an appropriate
payment rule.  Several such results were established in the
literature, starting with the one in \cite{GL77a} concerning Groves
mechanisms, originally proposed in \cite{Gro73}.  One of the earliest
characterization results was given in \cite{Mye81}, who considered
single object auctions in an imperfect information setting. In
\cite{Mil04} such characterizations are called `Myerson’s Lemma’. This
terminology was adopted in \cite{Rou16}, Chapter 3 of which, titled
`Myerson’s Lemma’, is concerned with a characterization of incentive
compatible single-parameter mechanisms which were introduced and
studied in \cite{AT01}.

As we explain below, both in this article and in Roughgarden’s book
such a characterization result is actually not proved. Most (but not
all) of the claims are rigorously established in \cite{Nis07}
in the context of randomized single-parameter mechanisms.

Given that this purported characterization of incentive compatible
single-parameter mechanisms is frequently referred to in the
literature (see e.g. \cite{HK07} or in \cite{Bab16}) we find it
justified to review these claims. We will provide an elementary proof
of the characterization result for two classes of allocation functions
considered in \cite{Rou16} and subsequently provide a proof of the
original claim of \cite{AT01} by appealing to more advanced results
from the theory of real functions.  We conclude by comparing our proof
to the one given in \cite{Kri02} and \cite{Bor15}.

\section{Preliminaries}
\label{sec:prelim}

We follow here the terminology of \cite{Rou16} that is slightly different
than the one originally used in \cite{AT01}.  In particular
\cite{Rou16} refers to a single-parameter mechanism and an allocation
rule, while \cite{AT01} refer to a one-parameter mechanism and load.

Each \bfe{single-parameter mechanism}
concerns sale of some `stuff’ to bidders and assumes
\begin{itemize}
\item a set of agents $\{1, \LL, n\}$, 

\item for every agent $i$, a value
  $v_i \ge 0$ which specifies $i$’s \bfe{private valuation}
  ``per unit of stuff’’ that he or she acquires.
\end{itemize}

In the auction the agents submit simultaneously their \bfe{bids},
which are their reported valuations ``per unit of stuff’’.  The
auctioneer receives the bids and determines how much `stuff’ each
agent receives and against which price.  So in contrast to the
single-item auctions each agent $i$ receives a possibly fractional
amount $a_i \ge 0$ of an object (here `a stuff’) he or she is
interested in.

An \bfe{allocation} is a vector $\mathbf{a} = (a_1, \dots, a_n)$,
where each $a_i \ge 0$ specifies the amount allocated to agent $i$.
A \bfe{payment} is a vector $\mathbf{p} = (p_1, \LL, p_ n)$, where
each $p_i \geq 0$ specifies the amount agent $i$ has to pay.

Each single parameter mechanism consists of an \bfe{allocation rule} 
\[
  \mathbf{a}: \mathbb{R}_{\ge 0}^n \to \mathbb{R}_{\ge 0}^n
\]
and a \bfe{payment rule}
\[
  \mathbf{p}: \mathbb{R}_{\ge 0}^n \to \mathbb{R}_{\ge 0}^n.
\]
Given a
vector of bids 
$\mathbf{b} = (b_1, \dots, b_n)$ such a mechanism
selects an allocation
$\mathbf{a}(\mathbf{b}) = (a_1(\mathbf{b}), \dots, a_n(\mathbf{b}))$
and a vector of payments
$\mathbf{p}(\mathbf{b}) = (p_1(\mathbf{b}), \dots, p_n(\mathbf{b}))$.

We assume that the \bfe{utility} of agent $i$ is defined by
\[
  u_i(\mathbf{b}) = v_i a_i(\mathbf{b}) - p_i(\mathbf{b}).
\]

We then say that a single-parameter mechanism is \bfe{incentive
  compatible} if for each agent $i$ truthful bidding, i.e., bidding
$v_i$, yields the best outcome regardless of bids of the other agents.
More formally, it means that for all agents $i$
\[
  u_i(v_i, \mathbf{b}_{-i}) \geq  u_i(b_i, \mathbf{b}_{-i}),
\]
for all bids $b_i$ of agent $i$ and all vectors of bids
$\mathbf{b}_{-i}$ of other agents, or equivalently---ignoring the
parameters $\mathbf{b}_{-i}$---that for all $y \geq 0$
\[
v_i a_i(v_i) - p_i(v_i)  \geq v_i a_i(y) - p_i(y).
\]

\section{A characterization result}

We say that a function $f: \mathbb{R}_{\ge 0} \to \mathbb{R}_{\ge 0}$ 
is \bfe{monotonically non-decreasing}, in short \bfe{monotone}, if 
\[
  0 \leq x \leq y \rightarrow f(x) \leq f(y).
\]

We say that an allocation rule $\mathbf{a}$ is \bfe{monotone} if for
every agent $i$ and every vector of bids $\mathbf{b}_{-i}$ of other
agents the function $a_i(\cdot, \mathbf{b}_{-i})$ is monotone.

The following result is stated in \cite{AT01}, \cite{Nis07}, and
\cite{Rou16}. (In \cite{Nis07} it is formulated as a result about
randomized single-parameter mechanisms but the proofs are the same for
the deterministic mechanisms considered here.)

\begin{theorem} \label{thm:char-ic}
\mbox{} \\[-2mm]
  \begin{enumerate}[(i)]

  \item If a mechanism $(\mathbf{a}, \mathbf{p})$ is incentive compatible
    then the allocation rule $\mathbf{a}$ is monotone.

  \item If the allocation rule $\mathbf{a}$ is monotone then for some
    payment rule $\mathbf{p}$ the mechanism $(\mathbf{a}, \mathbf{p})$ is
    incentive compatible.

  \item If the allocation rule $\mathbf{a}$ is monotone, then all
    payment rules $\mathbf{p}$ for which the mechanism
    $(\mathbf{a}, \mathbf{p})$ is incentive compatible differ by a
    constant.
  \end{enumerate}
\end{theorem}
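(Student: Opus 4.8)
The plan is to fix an agent $i$ and a bid profile $\mathbf{b}_{-i}$ of the others and abbreviate $a(\cdot) := a_i(\cdot, \mathbf{b}_{-i})$ and $p(\cdot) := p_i(\cdot, \mathbf{b}_{-i})$; since incentive compatibility is required for every true valuation $v \ge 0$, the defining inequality becomes: for all $v, y \ge 0$,
\[
v\,a(v) - p(v) \ge v\,a(y) - p(y).
\]
All three parts reduce to this one-variable statement. For (i), I would fix $x < y$ and add the two instances of incentive compatibility obtained by taking true valuation $x$ with deviation $y$, and true valuation $y$ with deviation $x$. The payment terms cancel and the surviving inequality rearranges to $(y-x)(a(y)-a(x)) \ge 0$, whence $a(y) \ge a(x)$.

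For (ii), I would exhibit the explicit payment rule $p(v) = v\,a(v) - \int_0^v a(z)\,dz$ (so $p(0)=0$). A monotone function is Riemann integrable on every bounded interval, so this is well defined, and writing $p(v) = \int_0^v (a(v)-a(z))\,dz \ge 0$ shows it lands in $\mathbb{R}_{\ge 0}$. Substituting into the incentive-compatibility inequality, the agent's truthful utility equals $\int_0^v a(z)\,dz$ while the utility from bidding $y$ equals $(v-y)a(y) + \int_0^y a(z)\,dz$; the required inequality then reduces to $\int_y^v a(z)\,dz \ge (v-y)\,a(y)$, which follows by comparing $a(z)$ with $a(y)$ on the interval between $y$ and $v$ and using monotonicity (treating $v \ge y$ and $v < y$ separately).

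Part (iii) is the real content and the step I expect to be the main obstacle, since it is exactly the uniqueness claim the paper says was never properly established. Here I would first extract from incentive compatibility a two-sided bound on the increments of any feasible payment rule: for $x < y$, taking true valuation $x$ and deviating to $y$ gives $p(y) - p(x) \ge x(a(y)-a(x))$, while taking true valuation $y$ and deviating to $x$ gives $p(y) - p(x) \le y(a(y)-a(x))$. Thus every incentive-compatible $p$ satisfies
\[
x\,(a(y)-a(x)) \le p(y) - p(x) \le y\,(a(y)-a(x)).
\]
If $p_1$ and $p_2$ are two such rules, their difference $q := p_1 - p_2$ therefore obeys $|q(y)-q(x)| \le (y-x)(a(y)-a(x))$ for all $x < y$, because $p_1(y)-p_1(x)$ and $p_2(y)-p_2(x)$ lie in a common interval of length $(y-x)(a(y)-a(x))$.

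It remains to deduce that $q$ is constant. I would avoid any appeal to differentiability of $a$ and argue instead by a telescoping estimate: given $s < t$, subdivide $[s,t]$ into $N$ equal subintervals $s = z_0 < \cdots < z_N = t$ and bound
\[
|q(t)-q(s)| \le \sum_{k=0}^{N-1} |q(z_{k+1})-q(z_k)| \le \frac{t-s}{N}\sum_{k=0}^{N-1}\bigl(a(z_{k+1})-a(z_k)\bigr) = \frac{(t-s)\,(a(t)-a(s))}{N},
\]
where the inner sum telescopes. Letting $N \to \infty$ forces $q(t) = q(s)$, so $q$ is constant. The delicate point, and the reason the classical arguments are fragile, is precisely that $a$ may be discontinuous or singular (for instance a monotone function with dense jumps, or a Cantor-type function); the telescoping bound sidesteps this because it uses only finiteness of $a(t)-a(s)$ together with monotonicity, and no regularity of $a$ whatsoever.
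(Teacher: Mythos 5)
Your parts (i) and (ii) coincide with the paper's arguments (Corollary \ref{cor:mon} and Lemma \ref{lem:existence}, the latter taken from Nisan): the same symmetrization trick for monotonicity and the same explicit payment formula with the same two-case integral comparison. Part (iii), however, is where you genuinely diverge, and your route is both correct and substantially more elementary than the paper's. The paper first shows (Lemma \ref{lem:+-}) that $G=g_1-g_2$ is continuous and has $G'=0$ at every continuity point of $f$, then works its way up from piecewise continuous $f$ (Theorem \ref{thm:1}) through discontinuity sets of first species (Theorem \ref{thm:limitpoints}) to the fully general monotone case, which it can only settle by invoking the Goldowski--Tonelli theorem (a continuous function, differentiable nearly everywhere with nonnegative derivative almost everywhere, is monotone). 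Your telescoping estimate replaces all of this: from the two-sided bound $x(a(y)-a(x))\le p(y)-p(x)\le y(a(y)-a(x))$ you get $|q(z_{k+1})-q(z_k)|\le \frac{t-s}{N}\,(a(z_{k+1})-a(z_k))$ on an equal partition, the inner sum telescopes to $a(t)-a(s)$, and the bound $\frac{(t-s)(a(t)-a(s))}{N}\to 0$ forces $q(t)=q(s)$. This uses only monotonicity of $a$ and the triangle inequality --- no differentiation, no integration, no measure theory --- and it handles in one stroke exactly the pathological cases (dense jump sets, Cantor-type allocation rules) that the paper flags in Remark \ref{rem:concerns} and needs heavy machinery to dispose of. What the paper's longer route buys in exchange is the intermediate structural information (continuity of $G$ everywhere, $G'=0$ at continuity points of $f$) and the explicit connection to the classical uniqueness arguments it is auditing; your argument buys a self-contained, assumption-free proof of item (iii).
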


There are some technically irrelevant differences between these three
references.  In \cite{AT01} instead of allocations loads are
considered, with the consequence that the loads are monotonically
non-increasing, though the authors also state that the results equally
apply to the setup that uses allocations.  In what follows, following
\cite{Nis07} and \cite{Rou16}, we use allocations. It leads to an
analysis of monotonically non-decreasing functions. Further, in the
last two references it is assumed that the payment rule yields 0
payment when bids are equal to 0, which makes it possible to drop in
$(iii)$ the qualification `up to a constant’. To make the discussion
applicable to arbitrary payment rules we do not adopt this
assumption.

Item $(i)$ is established in \cite{AT01} by appealing to the first
derivative, so under some assumptions about the load
function. However, a short argument given in \cite{Nis07} and
reproduced in \cite{Rou16} shows that no assumptions are needed.

In turn, item $(ii)$ is proved in \cite{AT01} `by picture’. A rigorous
proof is given in \cite{Nis07}, while in \cite{Rou16} only a `proof by
picture’ is provided for piecewise constant allocation rule and it is
mentioned that ``the same argument works more generally for monotone
allocation rules that are not piecewise constant’’.

Finally, in \cite{AT01} item $(iii)$ is claimed for arbitrary monotone
loads and allocation rules. But in the paper only a short proof sketch
is given that ends with a claim that ``To prove that all truthful
payment schemes take form (2), even when $\omega_i$ [the load rule] is
not smooth, we follow essentially the same reasoning as in the
[earlier given] calculus derivation.’’  However, this derivation
refers to load rules that are assumed to be smooth (actually only
twice differentiable, so that integration by parts can be applied),
while the characterization result is claimed for all monotone
allocation functions.

In \cite{Nis07} item $(iii)$ is established by reducing in the last
step the expression $\int_{0}^{x} z f’(z) dz$ to
$x f(x) - \int_{0}^{x} f(z) dz$.  We quote (adjusting the notation):
``[\dots] we have that $p(x) = \int_{0}^{x} z f’(z) dz$, and
integrating by parts completes the proof. (This seems to require the
differentiability of $f$, but as $f$ is monotone this holds almost
everywhere, which suffices since we immediately integrate.)’’
(Recall that a property holds \bfe{almost everywhere} if
it holds everywhere except at a set of measure 0, i.e., a set that can
be covered by a countable union of intervals the total length of which
is arbitrarily small.)
A minor point is that the initial part of the proof is incomplete as it
only deals with the right-hand derivative instead of the derivative.

Finally, in \cite{Rou16} about item $(iii)$ it is only stated without
proof ``We reiterate that these payments formulas [for the above two
classes of allocation functions] give \emph{the only possible} payment
rule that has a chance of extending the given allocation rule
\textbf{x} into a DSIC [i.e., incentive compatible] mechanism.’’  Also
here the formula (in the adjusted notation)
$p(x) = \int_{0}^{x} z f’(z) dz$ is derived by discussing only the
right-hand derivative.

In our view these arguments are incomplete as they do not take into
account some restrictions that need to be imposed on the use of
integrals and application of integration by parts. 
Note that, except in the final discussion, Riemann integration is assumed throughout.

\begin{remark} \label{rem:concerns}
\rm  

To start with,
integration by parts can fail for simple monotone functions, for
example those considered in \cite{Rou16}. Indeed, let for $q > 0$
\[
  H_q(x) :=\begin{cases}
    0 &\text{ if $0 \leq x \leq q$ }\\
    1 &\text{ if $x > q$ }
\end{cases}
\]
be an elementary step function with a single step at $x = q$. 

Take now $f = H_q$ with $q = 1/2$. Then $f’ = 0$ for $x \neq 1/2$ and
$f’$ is undefined for $x = 1/2$.  Consequently (defining $f’(1/2)$
arbitrarily)
\begin{equation}
  \label{equ:by-parts}
\int_{0}^{1} z f’(z) dz = 0 \neq 1/2 =  xf(x) \big{|}_0^1  - \int_{0}^{1} f(z) dz.
\end{equation}

Further, integration by parts can fail even if we insist on continuity.
Indeed, take for $f$ the Cantor function, see, e.g., \cite[pages
170-171]{Tao11}. It is monotone, continuous and almost everywhere
differentiable on $[0,1]$, with $f(0) = 0$, $f(1) = 1$ and $f’$ equal to 0
whenever defined.  Additionally, \eqref{equ:by-parts} holds for $f$,
as well.

Finally, there exists a monotone and everywhere differentiable
function $f$ for which the above integral $\int_{0}^{1} z f’(z) dz$
does not exist.  Indeed, as observed in \cite{Gof77}, there
exists a monotone and everywhere differentiable function
$f: [0,1] \rightarrow \mathbb{R}_{\ge 0}$ such that the integral
$\int_{0}^{1} f’(z) dz$ does not exist.  By a result of Lebesgue (see,
e.g., \cite{Bre08}) a bounded function defined on a bounded and closed
interval is Riemann integrable iff it is continuous almost
everywhere. But $f’$ is continuous almost everywhere on $[0,1]$ iff
the function $g(x) := xf’(x)$ is, so the claim follows.  \HB
\end{remark}

These points of concern motivate our subsequent considerations.  To
keep the paper self-contained we reprove items $(i)$ and $(ii)$, given
that the proofs are very short.

\section{An analysis}

Our analysis can be carried out without any reference
to mechanisms by reasoning about functions on
reals. We first rewrite the incentive compatibility condition as
\[
  p_i(y) - p_i(v_i) \geq v_i (a_i(y) - a_i(v_i)),
\]
which from now on
we analyze as the following condition on two functions
$f: \mathbb{R}_{\ge 0} \to \mathbb{R}_{\ge 0}$ and
$g: \mathbb{R}_{\ge 0} \to \mathbb{R}_{\ge 0}$:

\begin{equation}
  \label{equ:yax}
  \fa x, y: g(y) - g(x) \geq  x (f(y) -  f(x)).
\end{equation}

We are interested in solutions in $g$ given $f$.
We begin with the following obvious observation.

\begin{note} \label{not:implies}
The inequality \eqref{equ:yax} is equivalent to
\begin{equation}
  \label{equ:yax1}
\fa x, y: y (f(y) -  f(x)) \geq g(y) - g(x) \geq  x (f(y) -  f(x)).  
\end{equation}
\end{note}

\begin{proof}
  By interchanging in \eqref{equ:yax} $x$ and $y$ we get the
  additional inequality \\ $y (f(y) - f(x)) \geq g(y) - g(x)$.
\end{proof}

\begin{corollary} [\cite{Nis07,Rou16}] \label{cor:mon}
Suppose \eqref{equ:yax} holds. Then the function $f$ is monotone.  
\end{corollary}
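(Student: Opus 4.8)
The plan is to eliminate the unknown function $g$ from the hypothesis, thereby reducing \eqref{equ:yax} to a condition stated purely in terms of $f$ that directly expresses monotonicity. The crucial leverage is already supplied by Note~\ref{not:implies}, which converts the one-sided inequality \eqref{equ:yax} into the two-sided chain \eqref{equ:yax1} by the symmetry trick of swapping $x$ and $y$. So the real work has effectively been done, and what remains is a short combination.

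Concretely, I would first invoke Note~\ref{not:implies} to obtain, for all $x, y \geq 0$,
\[
  y (f(y) - f(x)) \geq g(y) - g(x) \geq x (f(y) - f(x)).
\]
Next I would simply drop the middle expression $g(y) - g(x)$ and transitively chain the outer two bounds, yielding
\[
  y (f(y) - f(x)) \geq x (f(y) - f(x)),
\]
which, after moving everything to one side, rearranges to
\[
  (y - x)(f(y) - f(x)) \geq 0 .
\]

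To finish, I would take arbitrary $x, y$ with $0 \leq x \leq y$. If $x = y$ there is nothing to prove; if $x < y$, then $y - x > 0$, and dividing the last inequality by $y - x$ gives $f(y) - f(x) \geq 0$, i.e.\ $f(x) \leq f(y)$. This is exactly the definition of $f$ being monotonically non-decreasing, so $f$ is monotone.

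I do not anticipate a genuine obstacle: the entire substance of the argument is the cancellation of $g$ through the symmetric pair of inequalities, and Note~\ref{not:implies} has already packaged that symmetry. The only thing to be careful about is presentational—making explicit that the sign condition $(y-x)(f(y)-f(x)) \geq 0$ holds for \emph{all} $x,y$ (not merely for ordered pairs) before specializing to $x \leq y$—but this requires no additional idea.
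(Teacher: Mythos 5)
Your proposal is correct and follows exactly the paper's own argument: apply Note~\ref{not:implies}, drop the middle term to get $(y-x)(f(y)-f(x)) \geq 0$, and conclude $f(x) \leq f(y)$ for $x < y$. No differences worth noting.
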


\begin{proof}
  Assume $0 \leq x < y$.  By Note \ref{not:implies} \eqref{equ:yax1}
  holds. The inequalities in \eqref{equ:yax1} imply
  $(y-x) (f(y) - f(x)) \geq 0$, so $f(x) \leq f(y)$.
\end{proof}

This establishes item $(i)$ of Theorem \ref{thm:char-ic}.  To
investigate items $(ii)$ and $(iii)$ we study existence and uniqueness
of solutions of \eqref{equ:yax} in $g$.  The following
result establishes item $(ii)$. The proof is from \cite{Nis07}.

\begin{lemma} \label{lem:existence}
Suppose $f$ is monotone.
Then \eqref{equ:yax} holds for 

    \begin{equation}
      \label{equ:rp}
g(x) =  C + x f(x) - \int_{0}^{x} f(z) dz,
    \end{equation}
where $C$ is some constant.
\end{lemma}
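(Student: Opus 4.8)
The plan is to verify directly that the proposed $g$ in \eqref{equ:rp} satisfies \eqref{equ:yax}, using only the monotonicity of $f$ and elementary properties of the Riemann integral. Since the integral $\int_0^x f(z)\,dz$ appears, the first thing I would confirm is that $g$ is well defined: a monotone function on $\mathbb{R}_{\ge 0}$ is bounded on every bounded interval and has at most countably many discontinuities, hence is Riemann integrable on $[0,x]$ for every $x$, so $\int_0^x f(z)\,dz$ exists. The additive constant $C$ plays no role here since \eqref{equ:yax} involves only the difference $g(y)-g(x)$, so I would set $C=0$ without loss of generality.

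Next I would compute $g(y)-g(x)$ explicitly. Substituting \eqref{equ:rp} gives
\[
  g(y)-g(x) = y f(y) - x f(x) - \int_{x}^{y} f(z)\,dz,
\]
using $\int_0^y f - \int_0^x f = \int_x^y f$. The goal is to show this difference is squeezed as in Note \ref{not:implies}, \eqref{equ:yax1}, namely that it lies between $x(f(y)-f(x))$ and $y(f(y)-f(x))$; by Note \ref{not:implies} this is equivalent to the single inequality \eqref{equ:yax}. I would treat the two cases $x \le y$ and $x > y$ separately, since the direction of the integral and the sign of $f(y)-f(x)$ flip.

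The heart of the argument is the case $x \le y$. Here I would rewrite the target lower bound $g(y)-g(x) \ge x(f(y)-f(x))$ as
\[
  y f(y) - x f(x) - \int_x^y f(z)\,dz \;\ge\; x f(y) - x f(x),
\]
which simplifies to $\int_x^y f(z)\,dz \le (y-x) f(y)$; similarly the upper bound reduces to $\int_x^y f(z)\,dz \ge (y-x) f(x)$. Both of these are immediate consequences of monotonicity: on $[x,y]$ we have $f(x) \le f(z) \le f(y)$, so integrating over an interval of length $y-x$ gives exactly these bounds. The case $x > y$ is symmetric and reduces to the same monotonicity estimate on $[y,x]$ with the inequalities reoriented.

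The main obstacle I anticipate is not any deep analytic difficulty but rather the bookkeeping required to keep the inequalities correctly oriented across the two cases and to recognize that the apparently troublesome term $\int_0^x f(z)\,dz$ is innocuous here precisely because we never differentiate $f$ or integrate $f'$. This is exactly the point flagged in Remark \ref{rem:concerns}: the existence direction \eqref{equ:rp} uses only $\int f$, which always exists for monotone $f$, so none of the pathologies involving $\int z f'(z)\,dz$ arise. The only care needed is to invoke Note \ref{not:implies} to collapse the two-sided estimate \eqref{equ:yax1} back to the required form \eqref{equ:yax}.
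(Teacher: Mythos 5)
Your proposal is correct and follows essentially the same route as the paper: substitute \eqref{equ:rp} into the inequality, reduce to bounding $\int_x^y f(z)\,dz$ between $(y-x)f(x)$ and $(y-x)f(y)$ via monotonicity, and handle the two orderings of $x$ and $y$. The only cosmetic difference is that you prove the two-sided estimate \eqref{equ:yax1} for $x\le y$ and invoke Note \ref{not:implies}, whereas the paper proves the single inequality \eqref{equ:int} in two cases; the underlying computation is identical.
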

Because $f$ is monotone $g$ is well defined (see, e.g., \cite{Rud76}).

\II

\begin{proof}
  By plugging the definition of $g$ in \eqref{equ:yax} we get after
  some simplifications

\begin{equation}
  \label{equ:int}
\int_{0}^{x} f(z) dz - \int_{0}^{y} f(z) dz \geq (x-y) f(y),
\end{equation}
which needs to be proved.
Two cases arise.
\II

\NI
\textbf{Case 1.} $x \geq y$.

Then
\[
\int_{0}^{x} f(z) dz - \int_{0}^{y} f(z) dz = \int_{y}^{x} f(z) dz \geq (x-y) f(y),
\]
where the last step follows by bounding the integral from below, since
by the monotonicity of $f$, we have
$f(y) \leq f(z)$ for $z \in [y, x]$. 
\II

\NI
\textbf{Case 2.} $y > x$.

Then 
\[
\int_{0}^{x} f(z) dz - \int_{0}^{y} f(z) dz = - \int_{x}^{y} f(z) dz \geq (x-y) f(y),
\]
where the last step follows by bounding the integral from above, since
by the monotonicity of $f$, we have
$f(z) \leq f(y)$ for $z \in [x, y]$.

So \eqref{equ:int} holds, which concludes the proof. 
\end{proof}

To deal with uniqueness let us first consider the case for which
the argument given in \cite{Nis07} can be justified.

\begin{lemma} \label{lem:everywhere}
  Suppose $f$ is everywhere differentiable. Then any two solutions $g$ of \eqref{equ:yax} 
  differ by a constant.
\end{lemma}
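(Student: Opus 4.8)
The plan is to show that the differentiability of $f$ forces every solution $g$ of \eqref{equ:yax} to be differentiable as well, with its derivative completely pinned down by $f$: at each point $x$ we will have $g'(x) = x f'(x)$. Once this pointwise formula is established, any two solutions $g_1, g_2$ satisfy $(g_1 - g_2)'(x) = 0$ for all $x \ge 0$, and since the domain $\mathbb{R}_{\ge 0}$ is an interval, the mean value theorem yields that $g_1 - g_2$ is constant, which is exactly the claim.

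First I would invoke the symmetric reformulation from Note \ref{not:implies}, namely \eqref{equ:yax1}, which sandwiches $g(y) - g(x)$ between $x(f(y) - f(x))$ and $y(f(y) - f(x))$. Fixing $x$ and dividing through by $y - x$, I would treat the cases $y > x$ and $y < x$ separately, keeping in mind that dividing by a negative quantity reverses the inequalities. In either case the difference quotient $\frac{g(y) - g(x)}{y - x}$ ends up squeezed between $x \cdot \frac{f(y) - f(x)}{y - x}$ and $y \cdot \frac{f(y) - f(x)}{y - x}$ (with the roles of upper and lower bound swapping according to the sign of $y - x$).

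The decisive step is to let $y \to x$. Because $f$ is differentiable at $x$, the quotient $\frac{f(y) - f(x)}{y - x}$ tends to $f'(x)$, while the prefactor $y$ tends to $x$; hence both squeezing bounds converge to the common value $x f'(x)$. The squeeze theorem then forces the one-sided limits of $\frac{g(y) - g(x)}{y - x}$ from the right and from the left to coincide and equal $x f'(x)$, so $g$ is differentiable at $x$ with $g'(x) = x f'(x)$. At the endpoint $x = 0$ only the right derivative is meaningful, and it equals $0 \cdot f'(0) = 0$, which causes no trouble.

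The argument is essentially routine; the only points requiring care are the bookkeeping of inequality directions when $y < x$ and the separate handling of the endpoint $x = 0$. I do not expect a genuine obstacle, and that is really the point: the full force of the hypothesis ``everywhere differentiable'' is precisely what legitimizes the squeeze at \emph{every} point. This is exactly the hypothesis that breaks down for the pathological functions in Remark \ref{rem:concerns}, which is why the general version of Theorem \ref{thm:char-ic}$(iii)$, without any differentiability assumption, will require the heavier analytic machinery discussed later.
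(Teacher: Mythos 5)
Your proposal is correct and is essentially the paper's own proof: the same squeeze of the difference quotient of $g$ between $x\,\frac{f(y)-f(x)}{y-x}$ and $y\,\frac{f(y)-f(x)}{y-x}$, carried out separately from the right and from the left, yielding $g'(x)=xf'(x)$ and hence constancy of the difference of any two solutions. The only cosmetic difference is that the paper writes the limits in terms of $y=x+h$ with $h\to 0^{\pm}$ rather than $y\to x$, and it disposes of the endpoint $x=0$ by a blanket convention stated earlier.
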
  

\begin{proof}
  Suppose that \eqref{equ:yax} holds. By Note \ref{not:implies}
  \eqref{equ:yax1} holds.  Given an arbitrary $x \geq 0$ we first use
  it with $y = x+h$, where $h > 0$. Dividing by $h$ we then obtain
\[
  \frac{(x+h) (f(x+h) -  f(x))}{h} \geq \frac{g(x+h) - g(x)}{h} \geq  \frac{x (f(x+h) -  f(x))}{h}.
\]

By the assumption about $f$ 
\[
\lim_{h \to 0^+} \frac{(x+h) (f(x+h) -  f(x))}{h} =  \lim_{h \to 0^+} \frac{x (f(x+h) -  f(x))}{h} = x f'(x),
\]
so 
\begin{equation}
    \label{equ:one}
\lim_{h \to 0^+} \frac{g(x+h) - g(x)}{h} = x f'(x).
\end{equation}

Next, we use \eqref{equ:yax1} with $x = y+h$, where $h < 0$. Dividing by $h$ we then obtain
\[
  \frac{y (f(y) -  f(y+h))}{h} \leq \frac{g(y) - g(y+h)}{h} \leq  \frac{(y+h) (f(y) -  f(y+h))}{h},
\]
so replacing $y$ by $x$ and multiplying by $-1$ we get
\[
  \frac{x (f(x+h) -  f(x))}{h} \geq \frac{g(x+h) - g(x)}{h} \geq  \frac{(x+h) (f(x+h) -  f(x))}{h}.
\]

By the assumption about $f$ and $x$
\[
\lim_{h \to 0^-} \frac{(x+h) (f(x+h) -  f(x))}{h} =  \lim_{h \to 0^-} \frac{x (f(x+h) -  f(x))}{h} = x f'(x),
\]
so 
\begin{equation}
    \label{equ:two}
\lim_{h \to 0^-} \frac{g(x+h) - g(x)}{h} =  x f'(x).
\end{equation}

We conclude from \eqref{equ:one} and \eqref{equ:two} that $g'(x)$
exists and
\begin{equation}
  \label{equ:derivative}
g'(x) = x f'(x).
\end{equation}
Hence all solutions $g$ to
\eqref{equ:yax} have the same derivative and consequently differ by a
constant.
\end{proof}

\begin{remark} \label{rem:1}
\rm
  
The above proof coincides with the one given in \cite{Nis07}, except on
two points.  First, only \eqref{equ:one} is established there. This allows
one only to conclude that the right derivative of $g$ in $x$ exists;
to establish that $g’(x)$ exists also \eqref{equ:two} is
needed. More importantly, Nisan argued that all solutions $g$ to
\eqref{equ:yax} are of the form \eqref{equ:rp} given in Lemma
\ref{lem:existence}.  Under the assumption that $f$ is everywhere
differentiable this additional claim is a direct consequence of Lemmas
\ref{lem:existence} and \ref{lem:everywhere}.

Nisan’s argument for this point involved integration and integration by
parts. To justify it we need to assume that $f’$ is continuous.
Then by \eqref{equ:derivative} also $g’$ is continuous, which allows
us to use the Fundamental Theorem of Calculus. It yields that for some
constant $C$
\[
  g(x) =   C + \int_{0}^{x} g’(z) dz.
\]
Further, integration by parts of
$\int_{0}^{x} z f’(z) dz$ is then also justified since $f$ is
everywhere differentiable and $f’$ is integrable (see, e.g.,
\cite{Rud76}).  Then $\int_{0}^{x} z f’(z) dz$ exists and by
integration by parts
\[
  \int_{0}^{x} z f’(z) dz = x f(x) - \int_{0}^{x} f(z) dz,
\]
so \eqref{equ:derivative} and the last two equalities imply that $g$
is indeed of the form \eqref{equ:rp} given in Lemma
\ref{lem:existence}.
\HB
\end{remark}

In the remainder of this section we do not use integration or the existence of
solutions in the form \eqref{equ:rp}, but proceed directly from
\eqref{equ:yax}. This allows us to sidestep the associated
complications and show that the requirement of $f$ being everywhere
differentiable of Lemma \ref{lem:everywhere} can be substantially
weakened and, appealing to strong results from the theory of real
functions, can even be removed altogether.

For $x=0$ continuity (differentiability) means right continuity
(differentiability), which we will not mention or treat separately.

We first need an auxiliary result.

\begin{lemma} \label{lem:+-}
  Let $g_1$ and $g_2$ be two solutions of \eqref{equ:yax} and let
  $G = g_1 - g_2$.

  \begin{enumerate}  [(i)]
  \item $G$ is continuous.

  \item If $f$ is continuous at $x$, then $G$ is differentiable at $x$ and $G’(x) = 0$.
  \end{enumerate}

\end{lemma}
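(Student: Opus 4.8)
The plan is to exploit the two-sided sandwich inequality \eqref{equ:yax1} from Note~\ref{not:implies}, applied to each solution $g_1, g_2$ separately, and then subtract. Since both $g_1$ and $g_2$ satisfy \eqref{equ:yax1} with the \emph{same} $f$, I can write, for any $0 \le x < y$,
\[
  x(f(y) - f(x)) \le g_j(y) - g_j(x) \le y(f(y) - f(x)) \qquad (j = 1, 2).
\]
Subtracting the lower bound for $g_2$ from the upper bound for $g_1$ and vice versa, the common term $f(y) - f(x)$ is what controls the difference $G(y) - G(x) = (g_1(y) - g_2(y)) - (g_1(x) - g_2(x))$. Specifically I expect to obtain a bound of the form
\[
  |G(y) - G(x)| \le (y - x)\,|f(y) - f(x)|,
\]
where the increment of $f$ appears as the modulus of continuity control. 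This single estimate should drive both parts of the lemma.

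For part~(i), continuity of $G$, I would observe that $f$ is monotone (by Corollary~\ref{cor:mon}), hence has only countably many discontinuities and in particular has one-sided limits everywhere. At a point $x$ where $f$ is continuous, letting $y \to x$ in the bound $|G(y) - G(x)| \le (y-x)|f(y) - f(x)|$ forces $G(y) \to G(x)$ since both factors on the right tend to $0$. At a jump discontinuity of $f$, the factor $|f(y) - f(x)|$ does not vanish, but it stays bounded (by the jump size), while $(y - x) \to 0$; so the product still tends to $0$ and $G$ remains continuous \emph{even at discontinuities of $f$}. This is the pleasant point: the extra factor of $(y-x)$ rescues continuity of $G$ regardless of $f$'s behavior.

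For part~(ii), differentiability of $G$ at a point $x$ where $f$ is continuous with $G'(x) = 0$, I would divide the bound by $(y - x)$ (treating $y > x$ and $y < x$ separately to handle both one-sided difference quotients) to get
\[
  \left| \frac{G(y) - G(x)}{y - x} \right| \le |f(y) - f(x)|.
\]
Continuity of $f$ at $x$ makes the right-hand side tend to $0$ as $y \to x$ from either side, so the two-sided difference quotient of $G$ has limit $0$; hence $G'(x)$ exists and equals $0$. The main obstacle — really the only subtlety — is bookkeeping the direction of the inequalities when $y < x$: I must apply \eqref{equ:yax1} with the roles of the two arguments swapped so that the factors $(y-x)$ and $f(y)-f(x)$ carry the correct signs, but once the absolute-value form is in hand both cases collapse into the single clean estimate above.
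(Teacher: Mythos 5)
Your proposal is correct and takes essentially the same route as the paper: both apply the sandwich \eqref{equ:yax1} to $g_1$ and $g_2$ and subtract to obtain the key estimate $|G(y)-G(x)|\leq |y-x|\,|f(y)-f(x)|$, then use the monotonicity of $f$ from Corollary \ref{cor:mon} (local boundedness of the increment of $f$) for part $(i)$ and divide by the increment to get the two-sided difference quotient bound $|f(y)-f(x)|$ for part $(ii)$. Your split into continuity points and jump points of $f$ in part $(i)$ is harmless but unnecessary, since local boundedness of the monotone $f$ already covers both cases, which is exactly how the paper argues.
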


\begin{proof}

\NI  
$(i)$
  By Note \ref{not:implies} \eqref{equ:yax1} holds for $g_1$ and
  $g_2$. By using it with $y = x+h$ for $g_1$ and for $g_2$ we obtain
\begin{equation}
\label{equ:G}
0 \leq | G(x+h) - G(x) |  \leq h(f(x+h) - f(x)).
\end{equation}

We have $h(f(x+h) - f(x)) \leq hf(x+h)$ for $h > 0$ and
$\leq -hf(x)$ for $h < 0$.
But by Corollary \ref{cor:mon} $f$ is monotone, hence
$\lim_{h \to 0} | G(x+h) - G(x) | = 0$, which establishes the claim.
\II

\NI
$(ii)$
Take some $x \geq 0$.  By \eqref{equ:G} for $h \neq 0$
  \[
0 \leq \Big{|}\frac{G(x+h) - G(x) }{h} \Big{|}  \leq |f(x+h) - f(x)|,
  \]
which implies the claim.
\end{proof}

\NI
Note that the continuity of $G$ holds for any $f$.

The following result with an elementary proof covers in a unified way item $(iii)$ of Theorem
\ref{thm:char-ic} for two classes of allocation functions considered
in \cite{Rou16}, piecewise constant and differentiable ones.

A function $f: \mathbb{R}_{\ge 0} \to \mathbb{R}_{\ge 0}$ is called
\bfe{piecewise continuous} if it has at most a finite number of
discontinuities in every bounded interval. Thus discontinuities can
occur only at isolated points separated by open intervals of
continuity. Piecewise constant and step functions are special
cases. This definition is a straightforward generalization to
$\mathbb{R}_{\ge 0}$ of the usual one for functions with a bounded
domain.

\begin{theorem} \label{thm:1} Suppose $f$ is
  piecewise continuous. Then any two solutions $g$ of \eqref{equ:yax} differ by a constant.
\end{theorem}
\begin{proof}
  Let $g_1$ and $g_2$ be two solutions of \eqref{equ:yax} and let
  $G = g_1 - g_2$.

  Let $f$ be piecewise continuous with discontinuities
  $q_1 < q_2 < \ldots $ and consider the intervals $I_0 = [0 , q_1)$
  ($\emptyset$ if $q_1 = 0$), $I_i = (q_i, q_{i+1})$ ($i \geq 1$),
  with $I_N = (q_N, \infty)$ if $f$ has a finite number $N>0$ of
  discontinuities and $I_0 = [0, \infty) = \mathbb{R}_{\ge 0}$ if
  $N = 0$.

If $f$ has an infinite number of
  discontinuities, $\lim_{i \to \infty} q_i = \infty$ since there can
  be only finitely many of them in any bounded interval. Hence, the
  $I_i$ and $q_i$ together cover the whole of $\mathbb{R}_{\ge 0}$.

$f$ is continuous on each $I_i$, so by Lemma \ref{lem:+-}$(ii)$ $G$ is
constant on $I_i$, say $G = C_i$ on $I_i$. Since $G$ is continuous
everywhere by Lemma \ref{lem:+-}$(i)$,
$C_0 = G(q_1) = C_1 = G(q_2) = \ldots $, so for some constant $C$, $G = g_1 - g_2 = C$.
\end{proof}

Theorem \ref{thm:1} can be generalized to a wider class of functions
whose discontinuity sets may have limit points (accumulation points),
at least to some degree. We give a simple example of a monotone
function $f$, for which Theorem \ref{thm:1} does not apply but the
stronger result presented below does.

Let
\[ 
f(x) = \sum_{ n = 1 }^\infty { 2^{-n} } H_{ 1 - { 2^{-n} } } (x),
\]
where $H_q$ is defined in Remark \ref{rem:concerns}.
It is not piecewise continuous, but has an infinite set of
discontinuities $\{ 1/2 , 3/4 , \ldots \}$ with a single limit point
$1$. Note that $f$ happens to be continuous at $x = 1$, but this might also have been otherwise.

Functions like $f$ and more complicated ones having discontinuity sets
with limit points of limit points, etc. can, to some extent, be dealt
with by adapting the proof of Theorem \ref{thm:1} and appealing to the
well-known Bolzano-Weierstrass theorem (BW for short, see, e.g.,
\cite{Bre08}).

Given a set $S \subseteq \mathbb{R}_{\ge 0}$, we denote by $S^{(1)}$
the set of its limit points (which need not be in $S$) and define
$S^{(n+1)} = (S^{(n)})^{(1)}$ for $n \geq 1$. A set $S$ is called
\bfe{first species} of type $n-1$ if $S^{(n)} = \emptyset$ and
$S^{(m)} \neq \emptyset$ for $m <n$ \citep{Bre08}. Such a set has limit
points, limit points of limit points, etc., up to level $n-1$. A first
species set of type $0$ has no limit points.

\begin{theorem} \label{thm:limitpoints}
Suppose the discontinuity set of $f$ is first species of type $n \ge 0$.
Then any two solutions $g$ of \eqref{equ:yax} differ by a constant.
\end{theorem}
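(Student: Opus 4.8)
The plan is to induct on the type $n$ of the first-species discontinuity set $D$, using Theorem~\ref{thm:1} as the base case. The key structural fact is that first-species sets of type $n$ are built from those of type $n-1$ by taking limit points: if $D^{(n)} = \emptyset$ and $D^{(n-1)} \neq \emptyset$, then $D^{(1)}$ is first species of type $n-1$, since $(D^{(1)})^{(n-1)} = D^{(n)} = \emptyset$. So the induction hypothesis will apply to the \emph{derived set} $D^{(1)}$ rather than to a subset of $D$ itself; this is the natural way the Cantor–Bendixson derivation decreases the type. As before, I set $G = g_1 - g_2$ and recall from Lemma~\ref{lem:+-} that $G$ is continuous everywhere and satisfies $G'(x) = 0$ wherever $f$ is continuous, i.e. everywhere off $D$.

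First I would handle the base case $n = 0$: here $D^{(0)} = \emptyset$ means $D$ has no limit points, so $D$ is a set of isolated points, and this is exactly the situation covered by Theorem~\ref{thm:1} (piecewise continuous $f$). Thus $G$ is constant. For the inductive step, assume the result for all functions whose discontinuity set is first species of type $<n$, and suppose $D$ has type $n \geq 1$. The derived set $D^{(1)}$ is first species of type $n-1$. The idea is that off $D$ we already know $G' = 0$, so $G$ is \emph{locally} constant on each open interval of $\mathbb{R}_{\ge 0} \setminus D$; the points of $D$ that are isolated in $D$ cause no trouble by continuity of $G$ (this is precisely the Theorem~\ref{thm:1} argument applied across an isolated discontinuity). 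What remains is to show $G$ does not jump as we cross the \emph{limit points} of $D$, i.e. the points of $D^{(1)}$.

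The mechanism for crossing $D^{(1)}$ is where I would invoke Bolzano--Weierstrass, as the paper signals. Concretely, I would argue that on any open interval $J$ disjoint from $D^{(1)}$, the discontinuity set $D \cap J$ has no limit points inside $J$ (by BW, any infinite subset would accumulate, and the accumulation point would lie in $D^{(1)}$), so $D \cap J$ is a closed discrete set in $J$; hence $f$ restricted to $J$ is piecewise continuous and Theorem~\ref{thm:1} gives that $G$ is constant on $J$. This shows $G$ is locally constant on the open set $\mathbb{R}_{\ge 0} \setminus D^{(1)}$. Now consider the function $f$ together with the reduced discontinuity structure governed by $D^{(1)}$: I want to conclude that $G$ is constant across all of $D^{(1)}$ too. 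This is the step the induction is designed for — $G$ behaves on $D^{(1)}$ exactly as in the lower-type problem, so the induction hypothesis (or a direct BW-plus-continuity argument at each level) forces the constants on adjacent pieces to agree, and continuity of $G$ then glues everything into a single constant.

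\textbf{The main obstacle} I anticipate is formulating the inductive step cleanly, because the induction is not on a function with a smaller discontinuity set but on the \emph{derived} set $D^{(1)}$, and the hypothesis of the theorem is phrased in terms of the discontinuity set of a function $f$ rather than an abstract set. The cleanest route is probably to prove a slightly restated lemma: \emph{if $G$ is continuous and locally constant off a first-species set $S$ of type $m$, then $G$ is globally constant}, and induct on $m$ using $S^{(1)}$. The delicate point is verifying that ``locally constant off $D$'' upgrades to ``locally constant off $D^{(1)}$'' via the BW/piecewise-continuity argument, and then that continuity of $G$ propagates the constant through each limit point; handling possible accumulation of $D^{(1)}$ at a single point (a point of $D^{(2)}$) without circularity is exactly what the type-indexed induction takes care of, but stating it so the levels decrease correctly is the part requiring care.
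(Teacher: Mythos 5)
Your proof is correct, and while it shares the overall strategy of the paper's proof (induction on the type of the discontinuity set, Bolzano--Weierstrass to control accumulation, continuity of $G$ from Lemma~\ref{lem:+-}$(i)$ to glue constants across discontinuities), it organizes the induction differently. The paper descends ``from the top'': for $S$ of type $k+1$ it isolates the locally finite set $S^{(k+1)}$, enumerates its points $q_1 < q_2 < \ldots$, applies the induction hypothesis to closed subintervals $J_i(\delta) = [q_i+\delta, q_{i+1}-\delta]$ (on which $S$ has type $\le k$), and then lets $\delta \to 0$, using continuity of $G$ to match the constants at the $q_i$. You descend ``from the bottom'': you pass from $D$ to its first derived set $D^{(1)}$, whose type drops by one, and run the induction through a reformulated lemma about $G$ alone (\emph{continuous and locally constant off a first-species set implies constant}). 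Your route buys two things: it avoids the $\delta \to 0$ limiting argument entirely, and it sidesteps the (glossed-over) issue in the paper of applying a theorem stated for functions on $\mathbb{R}_{\ge 0}$ to restrictions of $f$ on bounded closed subintervals; the cost is exactly the reformulation you identify as the ``main obstacle,'' which does go through --- the upgrade from ``locally constant off $D$'' to ``locally constant off $D^{(1)}$'' works because any open interval disjoint from $D^{(1)}$ meets $D$ in a set with no interior limit points, so the Theorem~\ref{thm:1} gluing applies there. One small slip: with the paper's convention (type $n-1$ means $S^{(n)} = \emptyset$, $S^{(m)} \neq \emptyset$ for $m<n$), if $D$ has type $n-1$ then $D^{(1)}$ has type $n-2$, not $n-1$ as you wrote; the indices in your first paragraph are off by one, though the substantive claim --- that derivation lowers the type by exactly one --- is what your induction needs and is true.
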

\begin{proof}
Let $g_1$ and $g_2$ be two solutions of \eqref{equ:yax} and let $G = g_1 - g_2$.
Let the discontinuity set of $f$ be $S$ and use induction with respect to the type of $S$.

If $n = 0$, $S^{(1)} = \emptyset$, so $f$ can have only finitely many discontinuities in every bounded interval. Otherwise, there would be a limit point in some bounded and closed interval by BW. Hence, $f$ is piecewise continuous and the case $n = 0$ corresponds to Theorem \ref{thm:1}.
  
Assume the theorem holds for all $n \leq k$ for some $k > 0$ and consider $S$ of type $k+1$.
The elements of $S^{(k+1)}$ are the limit points of level $k+1$. Recall that these need not be elements of $S$.
Since $S^{(k+2)} = \emptyset$, $S^{(k+1)}$ does not have limit points, so there can be only finitely many elements of $S^{(k+1)}$ in every bounded interval by BW as before.

Now let the elements of $S^{(k+1)}$ be $q_1 < q_2 < \ldots $ and consider the intervals $I_i$ ($ i \geq 0$)
  as in the proof of Theorem \ref{thm:1}. Together with the $q_i$,
  they cover the whole of $\mathbb{R}_{\ge 0}$ as in the previous proof.

However, the $S \cap I_i$ may still have $q_i$ and/or $q_{i+1}$ as limit points, which means 
the $S \cap I_i$ need not be of type $\leq k$ but can still be of type $k+1$, so the induction 
hypothesis cannot be applied to the $I_i$. Therefore, for fixed $i$ and sufficiently small $\delta > 0$ 
consider a non-empty bounded and closed subinterval
$J_i = J_i(\delta) = [ q_i + \delta, q_{i+1} - \delta]$ of $I_i$ (or, if the number of
limit points is a finite number $N$,
$J_N = J_N(\delta) = [ q_N + \delta , \infty)$).

Now $S \cap J_i$ can no longer have $q_i$ and/or $q_{i+1}$ as limit points.
Hence, it is of type $\leq k$ and the induction hypothesis applies to $J_i$, so
$G$ is constant on $J_i$, say $G = C_i$ on $J_i$. Since $G$ is continuous
everywhere by Lemma \ref{lem:+-}$(i)$ and
$\lim_{\delta \to 0} J_i = [ q_i, q_{i+1}]$ (or, if the number of limits points is $N$, $\lim_{\delta \to 0} J_N = [ q_N, \infty)$),
we get $G(q_i) = C_i = G(q_{i+1})$. The final step of the proof is the same as in the proof of Theorem \ref{thm:1}.
\end{proof}

Unfortunately, the above result does not cover all monotone functions
$f: \mathbb{R}_{\ge 0} \to \mathbb{R}_{\ge 0}$. Indeed, a monotone
function may be discontinuous on the set $\mathbb{Q}_{\ge 0}$ of
non-negative rational numbers, see, e.g., \cite{Rud76}, and
$\mathbb{Q}_{\ge 0}$ is not first species, since
$\mathbb{Q}_{\ge 0}^{(1)} = \mathbb{R}_{\ge 0}$ and
$\mathbb{R}_{\ge 0}^{(1)} = \mathbb{R}_{\ge 0}$.

This limitation can be circumvented by appealing to a strong result of
Goldowski and Tonelli. Recall first that a function
$f: \mathbb{R}_{\ge 0} \rightarrow \mathbb{R}_{\ge 0}$ is differentiable
\bfe{nearly everywhere} if it is differentiable except at a countable
number of points.
Note that \bfe{nearly everywhere} implies
\bfe{almost everywhere}. We need

\begin{theorem}[\cite{Gol28,Ton30,Sak37}] \label{thm:Gol}
Let $G: \mathbb{R}_{\ge 0} \to \mathbb{R}_{\ge 0}$ be a function such
    that

\begin{itemize}
\item $G$ is continuous,
  
\item $G$ is differentiable nearly everywhere,
  
\item $G’ \geq 0$ almost everywhere.
\end{itemize}

Then $G$ is monotone.
\end{theorem}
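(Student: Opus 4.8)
The plan is to argue by contradiction and reduce everything to an elementary ``last crossing'' argument, isolating the single real-analytic fact that does the real work. Suppose $G$ is not monotone; then there exist $0 \le a < b$ with $G(a) > G(b)$, and I set $\delta = G(a) - G(b) > 0$. The obstacle to running a naive intermediate-value argument is that $G'$ may be negative on a small set or undefined, so the first move is to quarantine that set. Write $E$ for the (countable) set where $G$ fails to be differentiable and $Z = \{x \notin E : G'(x) < 0\}$; the hypothesis $G' \ge 0$ almost everywhere says exactly that $m(Z) = 0$, while differentiability nearly everywhere gives that $E$ is countable and that $G$ is genuinely differentiable, with finite derivative, at every point of $Z$.

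Next I would tilt the function to turn ``$G' \ge 0$'' into a strict inequality. Choosing $\epsilon > 0$ with $\epsilon(b-a) < \delta$ and setting $H(x) = G(x) + \epsilon x$, the function $H$ is continuous, still satisfies $H(a) > H(b)$, is differentiable off $E$, and satisfies $H'(x) = G'(x) + \epsilon \ge \epsilon > 0$ for every $x \notin E \cup Z$. The key auxiliary claim is that the images $H(E)$ and $H(Z)$ are both Lebesgue-null: $H(E)$ because $E$ is countable, and $H(Z)$ because $H$ is differentiable with finite derivative at each point of the null set $Z$. Granting this, the interval $(H(b), H(a))$ has positive length, so I can pick a value $y_0 \in (H(b), H(a))$ lying outside the null set $H(E) \cup H(Z)$.

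Now the last-crossing argument finishes the contradiction. Let $\xi = \max\{x \in [a,b] : H(x) = y_0\}$, which exists because $H$ is continuous with $H(a) > y_0 > H(b)$, so the level set is nonempty, closed, and bounded. For $x \in (\xi, b]$, continuity and maximality of $\xi$ force $H(x) < y_0 = H(\xi)$, hence every right difference quotient $(H(\xi+h) - H(\xi))/h$ is negative and so $H'(\xi) \le 0$ if it exists. But $H(\xi) = y_0 \notin H(E) \cup H(Z)$ means $\xi \notin E \cup Z$, so $\xi$ lies in the good set and $H'(\xi) \ge \epsilon > 0$ --- a contradiction. Hence no pair $a < b$ with $G(a) > G(b)$ exists, and $G$ is monotone.

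I expect the main obstacle to be the auxiliary claim that $H(Z)$ is null, i.e. that a function differentiable with finite derivative at every point of a null set maps it to a null set. This is exactly where the hypothesis of differentiability nearly everywhere is essential: the Cantor function (continuous, $G' = 0$ almost everywhere, but not differentiable on an uncountable set) shows that continuity together with $G' \ge 0$ almost everywhere alone does not force monotonicity. I would establish the claim by the standard decomposition $Z = \bigcup_k Z_k$ with $Z_k = \{x \in Z : |H(y) - H(x)| \le k|y-x| \text{ whenever } |y-x| < 1/k\}$, covering each $Z_k$ by short intervals on which $H$ is $k$-Lipschitz and summing, so that $m^*(H(Z_k)) \le k\, m^*(Z_k) = 0$ and therefore $m^*(H(Z)) = 0$.
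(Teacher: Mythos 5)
Your proof is correct, but note that the paper does not prove this statement at all: Theorem \ref{thm:Gol} is quoted as a known result of Goldowski and Tonelli with citations to the original literature and to Saks, and is then applied as a black box in the proof of Theorem \ref{thm:full}. What you have supplied is the classical self-contained argument: tilt $G$ to $H(x)=G(x)+\epsilon x$ so that the derivative is bounded below by $\epsilon>0$ off the bad set, observe that the image of the bad set is Lebesgue-null (the countable exceptional set $E$ trivially, and the null set $Z=\{G'<0\}$ via the standard lemma that a function with a finite derivative at every point of a null set maps it to a null set), choose a regular value $y_0$ strictly between $H(b)$ and $H(a)$, and derive a contradiction at the last crossing point $\xi$, where the one-sided difference quotients force $H'(\xi)\le 0$ while $\xi$ lies in the good set where $H'\ge\epsilon$. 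All steps check out, including the decomposition $Z=\bigcup_k Z_k$ into sets on which $H$ is locally $k$-Lipschitz (the resulting bound is $m^*(H(Z_k))\le 2k\,m^*(Z_k)$ rather than $k\,m^*(Z_k)$, but this is immaterial). Your remark that the Cantor function blocks any weakening of ``differentiable nearly everywhere'' to ``almost everywhere'' is exactly the point the paper itself makes in Remark \ref{rem:concerns}. What your argument buys over the paper's bare citation is that it stays entirely within the elementary framework the paper advocates: it uses only outer measure of null sets (the paper's own definition of ``almost everywhere'') and the notion of derivative, with no Lebesgue integration, so including it would make the paper's uniqueness proof genuinely self-contained.
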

This leads directly from Lemma \ref{lem:+-} to the desired conclusion.

\begin{theorem} \label{thm:full} Any two solutions $g$ of
  \eqref{equ:yax} differ by a constant.
\end{theorem}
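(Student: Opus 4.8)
The plan is to deduce Theorem \ref{thm:full} directly from Lemma \ref{lem:+-} and the Goldowski--Tonelli result (Theorem \ref{thm:Gol}), since all the hard analytic work has been front-loaded into those two statements. Let $g_1$ and $g_2$ be any two solutions of \eqref{equ:yax} and set $G = g_1 - g_2$ as usual. The goal is to show that $G$ is constant. The key observation is that to apply Theorem \ref{thm:Gol} I do not need to assume anything about the structure of the discontinuity set of $f$: I only need to verify the three hypotheses of that theorem for $G$, and crucially also for $-G$, so that I can conclude both $G$ and $-G$ are monotone, whence $G$ is constant.

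First I would record from Lemma \ref{lem:+-}$(i)$ that $G$ is continuous for any $f$ whatsoever; this gives the first hypothesis of Theorem \ref{thm:Gol} with no extra effort. Next I would invoke the classical fact (cited in the excerpt, e.g. \cite{Rud76}) that a monotone function $f$ has at most countably many discontinuities, so $f$ is continuous at all but countably many points. At every point $x$ where $f$ is continuous, Lemma \ref{lem:+-}$(ii)$ tells me that $G$ is differentiable at $x$ with $G'(x) = 0$. Since $f$ is monotone by Corollary \ref{cor:mon} whenever \eqref{equ:yax} holds, its discontinuity set is countable, and therefore $G$ is differentiable nearly everywhere with $G' = 0$ there. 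This simultaneously supplies the second hypothesis (differentiability nearly everywhere) and a very strong form of the third.

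With these in hand the conclusion is immediate. Since $G' = 0$ at every point of differentiability, certainly $G' \ge 0$ nearly everywhere, hence almost everywhere, so Theorem \ref{thm:Gol} applies to $G$ and yields that $G$ is monotone non-decreasing. The same three hypotheses hold verbatim for $-G$: it is continuous, differentiable nearly everywhere, and its derivative equals $0 \ge 0$ there, so Theorem \ref{thm:Gol} applies to $-G$ as well and $-G$ is also monotone non-decreasing. A function that is monotone non-decreasing together with its negation can only be constant, so $G = g_1 - g_2$ is constant, which is exactly the claim.

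I do not expect any serious obstacle here, precisely because the difficulty has been absorbed into Theorem \ref{thm:Gol}; the only points requiring care are bookkeeping ones. The main thing to get right is the logical bridge between ``nearly everywhere'' and ``almost everywhere'': I must note explicitly (as the excerpt already does) that differentiability nearly everywhere implies differentiability almost everywhere, so that the phrase $G' \ge 0$ almost everywhere is meaningful and satisfied. The one genuinely substantive input, beyond the cited theorem, is the appeal to monotonicity of $f$ (via Corollary \ref{cor:mon}) to guarantee only countably many discontinuities, which is what makes ``nearly everywhere'' rather than merely ``almost everywhere'' available; the symmetric application to both $G$ and $-G$ is the small idea that converts one-sided monotonicity into the desired constancy.
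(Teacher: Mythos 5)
Your proposal is correct and follows essentially the same route as the paper's own proof: continuity of $G$ from Lemma \ref{lem:+-}$(i)$, near-everywhere differentiability with $G'=0$ from the countability of the discontinuity set of the monotone $f$ together with Lemma \ref{lem:+-}$(ii)$, and then the Goldowski--Tonelli theorem applied to both $G$ and $-G$. No gaps.
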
 

\begin{proof}
  Let $g_1$ and $g_2$ be two solutions of \eqref{equ:yax} and let
  $G = g_1 - g_2$. By Lemma \ref{lem:+-}$(i)$ $G$ is continuous.
  
  A monotone function is continuous nearly everywhere (see, e.g.,
  \cite{Rud76}). So by Lemma \ref{lem:+-}$(ii)$ $G$ is differentiable nearly
  everywhere and $G’ = 0$ nearly everywhere. Hence by Theorem
  \ref{thm:Gol} both $G$ and $-G$ are monotone, i.e., $G$ is constant.
\end{proof}

The above theorem justifies item $(iii)$ of Theorem \ref{thm:char-ic}.
The following result summarizes the results of this section.

\begin{theorem} \label{thm:1a}

Inequality  \eqref{equ:yax}
  holds iff $f$ is monotone and for some constant $C$
\[
g(x) = C + x f(x) - \int_{0}^{x} f(z) \: dz.
\]
\end{theorem}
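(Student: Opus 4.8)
The plan is to prove the stated equivalence by assembling the results already established in this section, so that the theorem reduces to a synthesis of Corollary~\ref{cor:mon}, Lemma~\ref{lem:existence}, and Theorem~\ref{thm:full} rather than requiring any new argument.

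First I would handle the direction from \eqref{equ:yax} to the right-hand side. Assuming \eqref{equ:yax} holds, Corollary~\ref{cor:mon} immediately gives that $f$ is monotone, which is half of what we want. Monotonicity of $f$ also makes the integral $\int_0^x f(z)\,dz$ well defined, so by Lemma~\ref{lem:existence} (taking the constant in \eqref{equ:rp} to be $0$) the specific function $g_0(x) = x f(x) - \int_0^x f(z)\,dz$ is itself a solution of \eqref{equ:yax}. Now $g$ and $g_0$ are two solutions of \eqref{equ:yax}, so Theorem~\ref{thm:full} yields that they differ by a constant; writing $g = C + g_0$ produces exactly the claimed form.

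For the converse I would assume that $f$ is monotone and that $g(x) = C + x f(x) - \int_0^x f(z)\,dz$. Since the constant $C$ cancels in every difference $g(y) - g(x)$ occurring in \eqref{equ:yax}, it suffices to check that $g_0(x) = x f(x) - \int_0^x f(z)\,dz$ satisfies \eqref{equ:yax}, and this is precisely the content of Lemma~\ref{lem:existence}. Hence \eqref{equ:yax} holds, completing the equivalence.

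Because every ingredient is already available, I expect no genuine obstacle in this proof; the only points demanding a word of care are the observation that monotonicity of $f$ guarantees Riemann integrability of $f$, so that $g_0$ is well defined, and the routine remark that additive constants are invisible to \eqref{equ:yax}. The substance of the result lies entirely in the uniqueness statement of Theorem~\ref{thm:full}, which has already been proved.
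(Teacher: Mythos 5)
Your proposal is correct and takes exactly the same route as the paper, whose proof is simply the one-line citation ``By Corollary~\ref{cor:mon}, Lemma~\ref{lem:existence}, and Theorem~\ref{thm:full}''; you have merely spelled out how those three ingredients combine. Nothing is missing.
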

\begin{proof}
  By Corollary \ref{cor:mon}, 
  Lemma \ref{lem:existence}, and Theorem \ref{thm:full}.
\end{proof}

\section{Discussion} \label{sec:Discussion}

Results closely corresponding to our uniqueness result (Theorem
\ref{thm:full}) were also presented in  \cite{Kri02} (and
its second edition \cite{Kri09}) and \cite{Bor15}. (The customary name
of these results is Revenue Equivalence.)  Krishna considers in
Chapter 5 a setup with a seller that has one indivisible object to
sell and $n$ potential buyers, while B\"{o}rgers considers in Chapter
2 a setup in which there is just one potential buyer.  In \cite{Kri09}
the equivalent of our function $f$ is defined as an integral
representing the probability that a buyer gets the object, while in
\cite{Bor15} $f$ corresponds to the probability of selling the object
to the buyer. However, a close inspection of the proofs of these
Revenue Equivalence results reveals that they do not depend on the
actual form of $f$.

Further, ignoring the differences in the setup, the corresponding
proofs in both books are from the mathematical point of view
essentially the same. As the arguments in the latter one are more
detailed, we discuss them here, but using our notation.

The proof of the corresponding result (Proposition 2.2) in
\cite{Bor15} is not based on the equivalent of our function $f$ but
instead deals, in Lemma 2.2, with the function $u$ (representing
utility) defined by
\[
  u(x) := xf(x) - g(x),
\]
and states that for all $x$ for which $u$ is differentiable,
\[
  u’(x) = f(x).
\]

Lemma 2.2 also establishes that the function $u$ is monotone and convex.
Then in Lemma 2.3 it is shown that
\[
  u(x) = u(0) +  \int_{0}^{x} f(z) dz,
\]
which is equivalent to \eqref{equ:rp} by taking $C = u(0)$, so the
uniqueness result (Lemma 2.4 (Revenue Equivalence)) corresponding to
our Theorem \ref{thm:full}, follows.

Lemma 2.3 is a direct consequence of two results from 
\cite{RF10}, namely, that convexity implies
absolute continuity (a notion we leave undefined here) and that every
absolutely continuous function is equal to the integral of its
derivative.

Note, however, that the latter result (Theorem 10 of \cite{RF10}) is
the Fundamental Theorem of Calculus (FTC) for the Lebesgue integral, a
fact not mentioned in \cite{Kri02} and in \cite{Bor15} deducible only
indirectly from footnote 2 in Chapter 2. So the proofs of the
uniqueness result (the Revenue Equivalence) presented in \cite{Kri02}
and \cite{Bor15} \emph{crucially} rely on the Lebesgue theory of
integration.  In contrast, our proof is much more elementary: it does
not rely on \emph{any} form of integration and appeals only to the
notion of derivative.  Only the existence result (Lemma
\ref{lem:existence}) relies on the Riemann integral.  Having said
this, apart from its complications, use of the Lebesgue integral
yields a very efficient proof.

Both Krishna and B\"{o}rgers establish in their books appropriate
Revenue Equivalence results for other mechanisms.  In particular,
B\"{o}rgers considers in Chapters 3 and 4 of \cite{Bor15} Bayesian
mechanisms and dominant mechanisms, each time for $n$ buyers.  In both
cases he establishes the corresponding Revenue Equivalence result
(Lemma 3.4 and Proposition 4.2) by explaining that the reasoning
provided in Chapter 2 can be repeated.

Given that in both setups the crucial inequalities are the
counterparts of \eqref{equ:yax} (considered separately for each
buyer), it follows that both results can be alternatively proved using
our Theorem \ref{thm:full}. We conclude that our more elementary
approach can be applied to other mechanisms than the single-parameter
mechanism considered in Section \ref{sec:prelim}.

Finally, we show that a generalization of Theorem \ref{thm:1a} allows
one to provide alternative, more elementary proofs of Revenue
Equivalence for two other types of auctions, considered in
\cite{Kri09} in Chapters 14 and 16. These auctions are concerned with
multiple objects, leading to functions $f$ and $g$ having to deal with
vectors.

We use the following notation.  For functions
$f: \mathbb{R}^n_{\ge 0} \to \mathbb{R}^n_{\ge 0}$ and
$g: \mathbb{R}^n_{\ge 0} \to \mathbb{R}_{\ge 0}$ and
$\x \in \mathbb{R}^n_{\ge 0}$ we introduce the functions
$f_{\x}: \mathbb{R}_{\ge 0} \to \mathbb{R}_{\ge 0}$ and
$g_{\x}: \mathbb{R}_{\ge 0} \to \mathbb{R}_{\ge 0}$ defined by
\[
  f_{\x}(t) =  f(t \x) \cdot \x,
\]
where  $\cdot$  is the inner product (dot product), and
\[
  g_{\x}(t) = g(t \x).
\]

We then say that $f: \mathbb{R}^n_{\ge 0} \to \mathbb{R}^n_{\ge 0}$ is
\bfe{monotone} if for each $\x \in \mathbb{R}^n_{\ge 0}$ the function
$f_{\x}$ is monotone.

We now establish the following result which generalizes Theorem
\ref{thm:1a} to dimension $n > 1$.  (Note that using the substitution
$u(z) = z x$ we have \\
$\int _0^x f(u) \: du = \int_{0}^{1} f(z x)x \: dz$.)

\begin{theorem} \label{thm:2}
\rm   
For two functions
$f: \mathbb{R}^n_{\ge 0} \to \mathbb{R}^n_{\ge 0}$ and
$g: \mathbb{R}^n_{\ge 0} \to \mathbb{R}_{\ge 0}$ the inequality 
\begin{equation}
  \label{equ:1}
  \fa \x, \y: g(\y) - g(\x) \geq  (f(\y) -  f(\x)) \cdot \x
\end{equation}
holds iff $f$ is monotone and for some constant $C$
\begin{equation}
  \label{equ:1a}
g(\x) = C + f(\x) \cdot \x - \int_{0}^{1} f(z\x) \cdot \x \: dz.
\end{equation}
\end{theorem}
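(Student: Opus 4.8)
The strategy is to reduce the $n$-dimensional inequality \eqref{equ:1} to the one-dimensional inequality \eqref{equ:yax} along every ray through the origin, and then apply Theorem \ref{thm:1a} on each such ray. Fix $\x \in \mathbb{R}^n_{\ge 0}$ and consider the one-dimensional functions $f_{\x}$ and $g_{\x}$ defined in the excerpt. The first step is to show that \eqref{equ:1} restricted to the points $t\x$ and $s\x$ (for scalars $s,t \ge 0$) is exactly the one-dimensional condition \eqref{equ:yax} for the pair $(f_{\x}, g_{\x})$. Plugging $\y = t\x$, $\x \mapsto s\x$ into \eqref{equ:1} gives $g(t\x) - g(s\x) \ge (f(t\x) - f(s\x)) \cdot (s\x)$, i.e. $g_{\x}(t) - g_{\x}(s) \ge s\,(f_{\x}(t) - f_{\x}(s))$, which is precisely \eqref{equ:yax} for $f_{\x}, g_{\x}$. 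This is a routine substitution and should be stated cleanly.

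Having made this reduction, the plan for the forward direction (\eqref{equ:1} implies the displayed form of $g$) is as follows. From the equivalence just noted, \eqref{equ:yax} holds for $(f_{\x}, g_{\x})$ for every $\x$, so by Corollary \ref{cor:mon} each $f_{\x}$ is monotone, which is exactly the definition of $f$ being monotone. Moreover, by Theorem \ref{thm:1a} applied to $(f_{\x}, g_{\x})$ we obtain, for each fixed $\x$ and some constant $C_{\x}$,
\[
  g_{\x}(t) = C_{\x} + t\, f_{\x}(t) - \int_{0}^{t} f_{\x}(z)\, dz.
\]
Evaluating at $t = 1$ and unwinding the definitions $g_{\x}(1) = g(\x)$, $f_{\x}(1) = f(\x)\cdot\x$, and $\int_0^1 f_{\x}(z)\,dz = \int_0^1 f(z\x)\cdot\x\, dz$ yields $g(\x) = C_{\x} + f(\x)\cdot\x - \int_0^1 f(z\x)\cdot\x\, dz$, which matches \eqref{equ:1a} except that the constant is a priori $\x$-dependent.

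The main obstacle, and the step I expect to require genuine care, is showing that $C_{\x}$ is in fact independent of $\x$, i.e. that it is a single global constant $C$. The key observation is that for any $\x$ and any scalar $\lambda > 0$, the ray through $\lambda\x$ is the same ray as through $\x$, and $C_{\x}$ can be recovered as $g_{\x}(0) = g(\mathbf{0})$ by setting $t = 0$ in the displayed formula (since $0 \cdot f_{\x}(0) = 0$ and $\int_0^0 = 0$). Thus $C_{\x} = g(\mathbf{0})$ for every $\x$, a value that does not depend on $\x$ at all, so we may set $C = g(\mathbf{0})$. This resolves the obstacle cleanly provided one notes that every ray passes through the origin, so the evaluation at $t=0$ always returns the same point. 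For the converse direction, one assumes $f$ is monotone and $g$ has the form \eqref{equ:1a}, restricts to an arbitrary ray to recover the one-dimensional representation of $g_{\x}$, and invokes the ``if'' direction of Theorem \ref{thm:1a} to conclude that \eqref{equ:yax} holds for $(f_{\x}, g_{\x})$ for every $\x$; reassembling via the substitution of the first paragraph gives back \eqref{equ:1}. The only care needed here is checking that the substitution identity $\int_0^x f(u)\,du = \int_0^1 f(z\x)\cdot\x\, dz$ quoted before the theorem statement matches the integral appearing in the one-dimensional formula, which the parenthetical remark in the excerpt already supplies.
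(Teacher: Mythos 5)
Your proposal is correct and follows essentially the same route as the paper's own proof: reduce \eqref{equ:1} to the one-dimensional condition \eqref{equ:yax} for the pair $(f_{\x}, g_{\x})$ along each ray through the origin, invoke the one-dimensional characterization (Theorem \ref{thm:1a}) on each ray, and identify the a priori ray-dependent constant as $C_{\x} = g_{\x}(0) = g(\mathbf{0})$. The one point worth flagging is shared with the paper rather than a deviation from it: in the converse direction, the ray-by-ray reassembly only recovers \eqref{equ:1} for pairs $\x = s\x'$, $\y = t\x'$ lying on a common ray, so the claimed equivalence for arbitrary non-proportional $\x,\y$ is not actually addressed by either argument.
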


\begin{proof}
First note that \eqref{equ:1} holds iff 
\begin{equation}
  \fa \x, x,y: g_{\x}(y) - g_{\x}(x)  \geq x (f_{\x}(y) - f_{\x}(x)),
  \label{equ:2}
\end{equation}
since for each $\x \in \mathbb{R}^n_{\ge 0}$
\[
  \fa x,y: g_{\x}(y) - g_{\x}(x) = g(y \x) - g(x \x)
\]
and by linearity of the inner product
\[
\fa x,y: x (f_{\x}(y) - f_{\x}(x)) = (f(y \x) -  f(x \x)) \cdot x \x.
\]

By Theorem \ref{thm:1}
\eqref{equ:2} holds iff for all $\x$ the function $f_{\x}$ is monotone and
for some constant $C_{\x}$
\begin{equation}
  g_{\x}(x) =  C_{\x} + x f_{\x}(x) - \int_{0}^{x} f_{\x}(z) \: dz.
  \label{equ:3}
\end{equation}

But $C_{\x} = g_{\x}(0) = g(\mathbf{0})$, so the constant $C_{\x}$
does not depend on $\x$.  Further, $g(\x) = g_{\x}(1)$, and
$1 f_{\x}(1) = f(\x) \cdot \x$, so by putting $x=1$ we see
that\eqref{equ:3} implies \eqref{equ:1a}.

But also \eqref{equ:1a} implies \eqref{equ:3}, which can be seen by
using \eqref{equ:1a} with $x \x$ instead of $\x$.

\end{proof}
\smallskip

Let us return now to \cite{Kri09}.  In Chapter 14 he
studies multiunit auctions in which multiple identical objects are
available. The relevant inequality (14.1) on page 204, capturing the
expected payment in an equilibrium for a player, corresponds to
\eqref{equ:1}. Theorem \ref{thm:2} then provides an alternative proof
of his Proposition 14.1 stating that
\begin{quote}
``The equilibrium payoff (and payment) functions of any bidder
in any two multiunit auctions that have the same allocation rule differ at most
by an additive constant.''
\end{quote}

Krishna's proof relies (implicitly) on the Lebesgue integral. We
adopted from his proof the idea of reasoning about the functions
$f_{\x}$ and $g_{\x}$.  In Chapter 16 of his book he studies auctions
in which one can bid for a set of nonidentical objects. (They are
called in the computer science literature combinatorial auctions.)
Krishna explains that ``The proof is \emph{identical} to that of
Proposition 14.1.''  (italic used by the author). Consequently, our
approach also yields an alternative proof of Revenue Equivalence for
combinatorial auctions.

\renewcommand{\bibfont}{\normalfont\small}
\bibliographystyle{apacite}
\bibliography{e}

\end{document}